\documentclass[11pt]{article}
\usepackage[a4paper,margin=1in]{geometry}
\usepackage{amsmath,amssymb,amsthm}
\usepackage{booktabs}
\usepackage{enumitem}
\usepackage{hyperref}
\usepackage{xspace}
\usepackage{algorithm}
\usepackage{algpseudocode}
\usepackage{tabularx}

\newcommand{\ARACE}{AR-ACE\xspace}
\newcommand{\Adv}{\mathsf{Adv}}
\newcommand{\negl}{\mathsf{negl}}
\newcommand{\REV}{\mathsf{REV}}
\newcommand{\Ctx}{\mathsf{Ctx}}
\newcommand{\ObjHash}{\mathsf{objHash}}
\newcommand{\Attest}{\mathsf{Attest}}
\newcommand{\domain}{\mathsf{domain}}

\newtheorem{theorem}{Theorem}[section]

\newtheorem{proposition}{Proposition}[section]

\title{\textbf{AR-ACE: ACE-GF-based Attestation Relay for PQC---\\
Lightweight Mempool Propagation Without On-Path Proofs}}
\author{Jian Sheng Wang\\
Yeah LLC\\
\texttt{jason@yeah.app}}
\date{March 9, 2026}

\begin{document}
\maketitle

\begin{abstract}
In post-quantum blockchain settings, objects that require validity proofs (e.g., blob roots, execution-layer or consensus-layer signature aggregates) must be broadcast through mempool and relay networks. Recursive STARKs have been proposed to aggregate such proofs so that each node forwards one proof per tick plus objects without proofs, capping per-node proof bandwidth at roughly 128\,KB $\times$ degree per tick. We observe that propagation does not inherently require \emph{validity proofs} on the path---only a lightweight assurance that an object is eligible for relay. We present \textbf{AR-ACE} (ACE-GF-based Attestation Relay for PQC), in which relay nodes forward \emph{objects plus compact attestations} (e.g., identity-bound signatures or commitments) and do not generate, hold, or forward any full validity proof. Only the builder (or final verifier) performs a single aggregated validity proof over the set of objects it includes. This \emph{proof-off-path} design removes proof overhead from the propagation path entirely, yielding an order-of-magnitude reduction in \emph{proof-related} relay bandwidth relative to proof-carrying propagation. When instantiated with ACE-GF-derived attestation keys, \ARACE preserves a unified identity story with on-chain authorization and is PQC-ready. We specify a protocol model, state design goals and security considerations, define security games, and provide a structural bandwidth comparison with recursive-STARK-based propagation.
\end{abstract}

\paragraph{Keywords.}
Post-Quantum Cryptography, Mempool Propagation, Bandwidth Efficiency, Attestation Relay, ACE-GF, Proof-Off-Path, Deterministic Identity.

\section{Introduction}
\label{sec:intro}

\subsection{The Bandwidth Problem in Post-Quantum Mempools}
\label{sec:intro:problem}
In distributed block construction, a large number of objects---such as blob roots (with erasure-coding correctness), execution-layer transactions, or consensus-layer signature aggregates---must be broadcast so that builders can discover and include them. In a post-quantum setting, elliptic-curve SNARKs may be considered unsuitable; STARKs~\cite{stark} are a natural alternative for proving validity of these objects. A single STARK proof, however, is on the order of 128\,KB even in size-optimized implementations. If every object were propagated together with its full STARK proof, the bandwidth demand on intermediate mempool nodes and on the builder would scale linearly with the number of objects, quickly becoming prohibitive.

Buterin~\cite{vbuterin-recursive-stark-mempool} proposed addressing this by \emph{recursive STARKs}: each mempool node periodically produces one recursive STARK attesting to the validity of all objects it currently holds, and forwards to each peer this single proof plus any objects (without proofs) not yet sent to that peer. The key result is that per-node proof bandwidth becomes \emph{constant} in the number of objects: roughly one STARK per tick per peer (e.g., 128\,KB $\times$ 8 / 0.5\,s $\approx$ 2\,MB/s per node), while object payloads are transmitted as in the status quo. This solves the explosion of proof traffic and is a valid and elegant solution.

\subsection{Key Observation: What Must Be Proven on the Path?}
\label{sec:intro:key-obs}
We ask a more fundamental question: \emph{must validity proofs appear on the propagation path at all?}

Relay nodes need to decide whether to accept and forward an object. That decision could be based on:
\begin{itemize}[nosep,leftmargin=1.5em]
  \item a \emph{full validity proof} (e.g., STARK) for the object---the approach underlying proof-carrying and recursive-STARK propagation; or
  \item a \emph{lightweight credential} that attests to the object's eligibility for relay (e.g., the submitter or an attestation service certifies that the object is well-formed and worth relaying), without proving full validity on the path.
\end{itemize}
If we adopt the second view, then \emph{validity} is proven only when the builder includes the object---e.g., the builder runs one aggregated STARK (or equivalent) over the set of objects it selects. Propagation need only carry the object plus a small attestation (signature, commitment, or bond reference). We call this \emph{proof-off-path} propagation: proofs stay off the relay path; only the builder sees or produces them.

The underlying principle is general: when a heavy cryptographic artifact (proof, signature) must be verified at a final stage, it need not be carried on every intermediate hop. Instead, intermediate nodes verify only a lightweight credential sufficient for relay eligibility, and the heavy verification is deferred to the endpoint (here, the builder). Rather than carrying validity \emph{proofs} on the relay path, we carry \emph{attestations} and defer proofs to the builder.

\subsection{Contributions}
\label{sec:intro:contrib}
\begin{enumerate}[nosep,leftmargin=1.5em]
  \item We introduce \ARACE (ACE-GF-based Attestation Relay for PQC), a propagation design in which relay nodes forward objects plus compact attestations and do not generate, hold, or forward full validity proofs.
  \item We specify the proof-off-path protocol model: roles (submitters, relay nodes, builder), the attestation credential, and the rule that only the builder performs aggregated validity verification.
  \item We state design goals (bandwidth efficiency, spam resistance, binding of attestation to object and identity), define security games (attestation forgery, replay, admission bypass), and discuss security considerations for attestation-based relay.
  \item We provide a structural bandwidth comparison showing that \ARACE removes proof traffic from the propagation path entirely, yielding an order-of-magnitude reduction in proof-related relay bandwidth relative to proof-carrying propagation, while remaining compatible with a single aggregated proof at the builder.
  \item We describe the advantages of instantiating attestation credentials with ACE-GF: unified identity with on-chain authorization, context isolation, PQC-ready derivation, and a single root for both authorization and relay attestation.
\end{enumerate}

\subsection{Relation to ACE-GF}
\label{sec:intro:relation}
\ARACE does not \emph{require} ACE-GF: the proof-off-path architecture can be instantiated with any mechanism that supplies a compact attestation (e.g., BLS signature, Ed25519, bond reference). When instantiated with ACE-GF-derived keys~\cite{arxiv-2511-20505} for attestation, we obtain a unified story: the same identity root that backs on-chain authorization can derive a dedicated attestation key for mempool relay, with context isolation and PQC-ready streams. This paper presents \ARACE as the general design and highlights the ACE-GF instantiation as the preferred option for identity coherence and post-quantum consistency.

\subsection{Paper Organization}
Section~\ref{sec:related} surveys related work. Section~\ref{sec:prelim} gives notation and assumptions. Section~\ref{sec:problem} states the problem and design goals. Section~\ref{sec:design} presents the \ARACE design. Section~\ref{sec:security} discusses security, including the threat model and security games. Section~\ref{sec:bandwidth} compares bandwidth and outlines practical optimization directions. Section~\ref{sec:acegf} summarizes ACE-GF advantages. Section~\ref{sec:conclusion} concludes.

\section{Related Work}
\label{sec:related}

\paragraph{Recursive STARK mempool.}
Buterin~\cite{vbuterin-recursive-stark-mempool} proposed recursive STARKs~\cite{stark} so that each node produces one proof per tick attesting to the validity of all objects it knows; objects are sent without proofs. This technique builds on recursive proof composition~\cite{halo,nova}. It caps per-node proof bandwidth but still requires every node to produce and forward STARKs. Our approach removes proofs from the path entirely; the builder alone performs (or verifies) an aggregated proof.

\paragraph{Identity-centric authorization.}
An emerging approach to post-quantum blockchain authorization replaces large on-chain signature objects with identity-bound zero-knowledge proofs of authorization semantics, reducing on-chain data by an order of magnitude. \ARACE applies the same principle to the relay layer: rather than compressing a heavy artifact (proof), we change \emph{what} is verified on the path (attestation of eligibility vs.\ validity proof) and defer the heavy verification to the builder.

\paragraph{Data availability and blobs.}
EIP-4844~\cite{eip4844} introduces blob transactions and data-availability sampling (DAS). Propagation of blob roots and their validity (e.g., erasure-coding correctness) is one use case for both recursive STARK propagation and \ARACE. Our design is agnostic to the exact validity predicate; we only require that the builder can eventually verify inclusion in a single aggregated proof.

\paragraph{P2P gossip and relay networks.}
Modern blockchain mempool propagation relies on structured gossip protocols such as GossipSub~\cite{gossipsub}, which provides attack-resilient message dissemination over unstructured overlays, building on DHT foundations like Kademlia~\cite{kademlia}. \ARACE is agnostic to the underlying gossip topology; it only requires that relay nodes can verify attestations before forwarding.

\paragraph{Account abstraction and relayers.}
ERC-4337~\cite{erc4337} and relayer/bundler architectures separate the entity that signs or authorizes from the entity that submits. \ARACE is compatible with such separation: the attestation can be produced by the submitter or by a relayer holding a dedicated attestation key (e.g., derived from ACE-GF under a relay-specific context).

\section{Preliminaries and Assumptions}
\label{sec:prelim}

\subsection{Notation}
\label{sec:prelim:notation}
\begin{itemize}[nosep,leftmargin=1.5em]
  \item $\lambda$: security parameter.
  \item $\ObjHash$: hash of an object (blob root, transaction, or aggregate) under a collision-resistant hash.
  \item $\Attest$: attestation credential (e.g., signature over $\ObjHash$ or commitment binding object and identity).
  \item $\domain$: domain separator (chain, mempool instance, or application).
  \item $\REV$: Root Entropy Value (ACE-GF identity root); when ACE-GF is used, attestation keys are derived from $\REV$ under a relay-specific $\Ctx$.
  \item Relay path: the sequence of nodes from submitter(s) to the builder. \emph{On-path} means on this path; \emph{off-path} means not on this path (e.g., only at the builder).
\end{itemize}

\subsection{Assumed ACE-GF Interface (When Used)}
\label{sec:prelim:acegf}
When \ARACE is instantiated with ACE-GF~\cite{arxiv-2511-20505}, we assume the following black-box interface: deterministic key derivation $\mathsf{Key} \leftarrow \mathsf{Derive}(\REV, \Ctx)$ with context isolation. For attestation, we use a dedicated context, e.g., $\Ctx_{\mathsf{relay}} = (\mathsf{AlgID}, \mathsf{MEMPOOL\text{-}ATTEST}, \mathsf{domain})$, so that the attestation key is isolated from signing, encryption, or on-chain authorization keys. No persistent storage of $\REV$ is required; the attestation key is derived ephemerally when producing $\Attest$.

\subsection{Network and Role Model}
\label{sec:prelim:network}
\begin{itemize}[nosep,leftmargin=1.5em]
  \item \textbf{Submitters}: Produce objects (e.g., blob roots, transactions) and an attestation $\Attest$ binding themselves (or their identity commitment) to $\ObjHash$ (and optionally $\domain$, nonce). They send $(\mathsf{obj}, \Attest)$ into the mempool.
  \item \textbf{Relay nodes}: Receive $(\mathsf{obj}, \Attest)$ from peers or submitters. They verify that $\Attest$ is valid for $\ObjHash$ (and domain) and that the object is not yet expired or discarded. They do \emph{not} verify full validity (e.g., erasure coding, signature aggregate correctness). They forward $(\mathsf{obj}, \Attest)$ to their peers. They do not generate or forward any STARK or other validity proof.
  \item \textbf{Builder}: Collects $(\mathsf{obj}, \Attest)$ from the network. Selects a set of objects to include. Produces or verifies \emph{one} aggregated validity proof (e.g., recursive STARK) over the included set. Publishes the block (or batch) together with this single proof.
\end{itemize}

\section{Problem Statement and Design Goals}
\label{sec:problem}

\subsection{Bandwidth Problem}
\label{sec:problem:bandwidth}
The bandwidth problem we address is the following. Many objects must be broadcast so that a builder can include them. Each object has an associated \emph{validity} predicate (e.g., erasure coding correct, signatures valid). In a post-quantum setting, validity may be proved with a STARK; a single STARK is $\approx$128\,KB. If every object is propagated with its proof, relay bandwidth grows as $O(\text{\#objects} \times 128\,\text{KB})$. Recursive STARK aggregation~\cite{vbuterin-recursive-stark-mempool} reduces this to $O(1)$ proof traffic per node per tick (one proof per peer per tick), but each node must still produce and send STARKs. Our goal is to eliminate proof traffic from the relay path entirely while preserving the guarantee that only valid objects can be included by the builder.

\subsection{Design Goals}
\label{sec:problem:goals}
\begin{itemize}[nosep,leftmargin=1.5em]
  \item \textbf{Bandwidth efficiency.} Relay nodes must not send or receive full validity proofs. Only objects and compact attestations (e.g., tens to hundreds of bytes per object) flow on the path.
  \item \textbf{Spam resistance.} Attestation must be costly enough (or rate-limited) so that anonymous spam is impractical. Options include: attestation key tied to identity or stake, small bond per attestation, or rate limits per attestation key.
  \item \textbf{Binding.} $\Attest$ must bind to $\ObjHash$ (and optionally $\domain$) so that attestations cannot be replayed on different objects or domains.
  \item \textbf{Validity at inclusion.} The builder verifies validity of included objects via a single aggregated proof. No object without a valid attestation need be considered; the builder may additionally require that attestations come from eligible keys or commitments.
  \item \textbf{Compatibility.} The design should be compatible with existing validity predicates (erasure coding, signature aggregation) and with a single aggregated proof (e.g., recursive STARK) at the builder.
\end{itemize}

\paragraph{Minimal admission policy model (for this paper).}
\label{sec:problem:admission}
To make the spam-resistance assumptions explicit, we assume a minimal admissibility policy at relay:
(i) only attestations from eligible keys (registered or bonded) are admitted,
(ii) each eligible key has a per-epoch quota, and
(iii) policy violations trigger drop plus temporary denylisting.
This paper does not optimize policy economics; it only requires that admissibility is enforceable
and expensive enough that large-scale spam is not costless.

\section{AR-ACE Design}
\label{sec:design}

\subsection{Attestation Credential}
\label{sec:design:attest}
An attestation $\Attest$ is a compact credential that binds the attester (e.g., identity or public key) to the object hash $\ObjHash$ and optionally to $\domain$ and a nonce or replay token. Concretely, $\Attest$ can be:
\begin{itemize}[nosep,leftmargin=1.5em]
  \item A signature $\sigma$ under key $\mathsf{pk}$ on message $m = \ObjHash \| \domain \| \mathsf{nonce}$ (or a canonical encoding), using a standard signature scheme with EUF-CMA security~\cite{gmr88}. Verification on the path checks $\mathsf{Verify}(\mathsf{pk}, m, \sigma)$ and optionally that $\mathsf{pk}$ is in an allowed set or derived from a committed identity.
  \item When using ACE-GF: $\mathsf{pk}$ is the public key corresponding to $\mathsf{Derive}(\REV, \Ctx_{\mathsf{relay}})$. The same $\REV$ that backs on-chain identity commitments derives the relay key under a separate context, so compromise of a relay-only key does not expose authorization keys.
\end{itemize}
Size: a signature (e.g., Ed25519~\cite{ed25519} 64 bytes, ML-DSA~\cite{mldsa} level 2 $\approx$2.4\,KB if PQC is used for attestation) plus optional $\mathsf{pk}$ or commitment.

\paragraph{Deployment profiles.}
\textbf{Profile A (Hybrid-performance):} on-path attestations use short classical signatures (e.g., Ed25519~\cite{ed25519} or BLS~\cite{bls}) to minimize bandwidth and relay CPU; PQC security is enforced at inclusion/final verification layers.\\
\textbf{Profile B (Full-PQC):} on-path attestations use PQC signatures (e.g., ML-DSA~\cite{mldsa}), increasing per-object bytes but keeping a fully PQC-consistent relay path.

\subsection{Propagation Protocol}
\label{sec:design:propagation}
\begin{enumerate}[nosep,leftmargin=1.5em]
  \item Submitter computes $\ObjHash = H(\mathsf{obj})$, produces $\Attest$ (signature or commitment) binding $\ObjHash$, $\domain$, and optional nonce. Sends $(\mathsf{obj}, \Attest)$ to one or more relay nodes.
  \item Relay node: on receipt of $(\mathsf{obj}, \Attest)$, verifies $\Attest$ for $\ObjHash$ and $\domain$; checks replay/expiry. Does \emph{not} verify full validity (no STARK, no erasure-coding check). Forwards $(\mathsf{obj}, \Attest)$ to peers that have not yet received this object.
  \item Builder: subscribes to or pulls objects (with attestations) from the network. Selects a set $S$ of objects to include. Optionally discards objects that are no longer valid for inclusion. Produces one aggregated validity proof $\Pi$ for $S$ (e.g., recursive STARK: ``all objects in $S$ satisfy the validity predicate''). Publishes block/batch with $S$ and $\Pi$. Verification of the block checks $\Pi$; no per-object proof is verified on-chain.
\end{enumerate}

\noindent Algorithm~\ref{alg:arace} formalizes the three roles.

\begin{algorithm}[t]
\caption{AR-ACE Propagation Protocol}\label{alg:arace}
\begin{algorithmic}[1]
\Statex \textbf{Parameters:} signature scheme $(\mathsf{KeyGen}, \mathsf{Sign}, \mathsf{Verify})$, hash $H$, domain $\domain$, eligible key set $\mathcal{E}$, per-key quota $Q$
\Statex
\Statex \underline{\textsc{Submitter}($\mathsf{obj}, \mathsf{sk}$)}
\State $\ObjHash \gets H(\mathsf{obj})$
\State $\mathsf{nonce} \gets$ fresh replay token
\State $m \gets \ObjHash \| \domain \| \mathsf{nonce}$
\State $\Attest \gets \mathsf{Sign}(\mathsf{sk}, m)$
\State \textbf{send} $(\mathsf{obj}, \Attest, \mathsf{pk}, \mathsf{nonce})$ to relay peers
\Statex
\Statex \underline{\textsc{Relay}(received $(\mathsf{obj}, \Attest, \mathsf{pk}, \mathsf{nonce})$)}
\State $\ObjHash \gets H(\mathsf{obj})$
\State $m \gets \ObjHash \| \domain \| \mathsf{nonce}$
\If{$\mathsf{pk} \notin \mathcal{E}$} \textbf{drop and return} \EndIf
\If{$\mathsf{quota}[\mathsf{pk}] \geq Q$} \textbf{drop and return} \EndIf
\If{$\mathsf{nonce}$ seen before} \textbf{drop and return} \EndIf
\If{$\mathsf{Verify}(\mathsf{pk}, m, \Attest) = 0$} \textbf{drop and return} \EndIf
\State $\mathsf{quota}[\mathsf{pk}] \gets \mathsf{quota}[\mathsf{pk}] + 1$
\State record $\mathsf{nonce}$
\State \textbf{forward} $(\mathsf{obj}, \Attest, \mathsf{pk}, \mathsf{nonce})$ to peers not yet served
\Statex
\Statex \underline{\textsc{Builder}(object pool $\mathcal{P}$)}
\State $S \gets$ select objects from $\mathcal{P}$ by policy
\State $\Pi \gets \mathsf{AggregatedProve}(\{(\mathsf{obj}, \text{validity predicate}) : \mathsf{obj} \in S\})$
\If{$\Pi$ valid} \textbf{publish} block $(S, \Pi)$ \EndIf
\end{algorithmic}
\end{algorithm}

\subsection{What Flows on the Path}
\label{sec:design:flow}
On every link of the relay path:
\begin{itemize}[nosep,leftmargin=1.5em]
  \item \textbf{Transmitted:} $\mathsf{obj}$ (unchanged) and $\Attest$ (constant size). No STARK, no recursive proof.
  \item \textbf{Not transmitted:} Any validity proof. Proofs exist only at the builder (output of aggregation) and in the published block.
\end{itemize}
Thus the \emph{proof-off-path} property: validity proofs are off the propagation path.

\subsection{ACE-GF Instantiation (Summary)}
\label{sec:design:acegf}
When attestation keys are derived via ACE-GF:
\begin{itemize}[nosep,leftmargin=1.5em]
  \item Use $\Ctx_{\mathsf{relay}} = (\mathsf{AlgID}, \mathsf{MEMPOOL\text{-}ATTEST}, \domain)$ so that relay attestation is in its own stream.
  \item The same $\REV$ (reconstructed from sealed artifact and credential) that backs on-chain identity commitments can derive $\mathsf{sk}_{\mathsf{relay}}$. One root, one backup; no extra key management.
  \item Optional: use a PQC algorithm in $\Ctx_{\mathsf{relay}}$ for attestation (e.g., ML-DSA) so that the entire path is PQC-consistent; then $\Attest$ is larger but still far smaller than a STARK, and no proof flows on the path.
\end{itemize}

\section{Security Considerations}
\label{sec:security}

\subsection{Threat Model and Security Games}
\label{sec:security:games}
We consider a probabilistic polynomial-time adversary with full visibility into mempool traffic and the ability to submit, replay, and reorder relayed objects. Security is expressed with three games following the code-based game-playing framework~\cite{bellare-rogaway}.

\paragraph{Game $\mathsf{G}_{\mathsf{forge}}$ (Attestation Forgery).}
The challenger exposes public eligibility metadata and oracle access for valid attestations under eligible keys. The adversary wins if it outputs $(\mathsf{obj},\Attest)$ that passes relay verification for an eligible key without querying that message. Advantage is
\[
\Adv^{\mathsf{forge}}_{\ARACE}(\lambda)=\Pr[\mathsf{G}_{\mathsf{forge}}=1].
\]

\paragraph{Game $\mathsf{G}_{\mathsf{replay}}$ (Cross-domain / replay misuse).}
The adversary observes valid attestations and wins if it reuses one attestation on a different $(\ObjHash,\domain,\mathsf{nonce})$ tuple and still passes relay checks. Advantage is
\[
\Adv^{\mathsf{replay}}_{\ARACE}(\lambda)=\Pr[\mathsf{G}_{\mathsf{replay}}=1].
\]

\paragraph{Game $\mathsf{G}_{\mathsf{admit}}$ (Admission bypass under policy).}
Given the admissibility policy in Section~\ref{sec:problem:admission}, the adversary wins if it outputs $(\mathsf{obj},\Attest)$ that is accepted and forwarded by an honest relay despite violating the policy (ineligible key, quota breach, or denylisted key). Advantage is
\[
\Adv^{\mathsf{admit}}_{\ARACE}(\lambda)=\Pr[\mathsf{G}_{\mathsf{admit}}=1].
\]

\paragraph{Executable interfaces for admission assumptions.}
To make the admission bound machine-checkable at the model level, we define the three terms via explicit experiments:
\begin{itemize}[nosep,leftmargin=1.5em]
  \item $\Adv^{\mathsf{meta}}(\lambda)$: advantage in $\mathsf{G}_{\mathsf{meta}}$, where the adversary outputs eligibility metadata (or update transcript) accepted by honest relays but not authorized by the metadata root/signing authority for the same epoch.
  \item $\Adv^{\mathsf{state}}(\lambda)$: probability in $\mathsf{G}_{\mathsf{state}}$ that honest relay state transitions violate the specified admission state machine (quota counter, nonce set, epoch reset) under an admissible event trace.
  \item $\Adv^{\mathsf{net}}(\lambda)$: advantage in $\mathsf{G}_{\mathsf{net}}$ that an in-flight message modification changes admission-relevant fields (key id, nonce, domain, epoch tag) while still passing endpoint integrity checks.
\end{itemize}
All three are parameterized by concrete relay implementation and network channel assumptions; they can be instantiated directly in a simulation or model-checking harness.

\begin{theorem}[Relay-path authentication and anti-replay]
Assume EUF-CMA security~\cite{gmr88} of the attestation signature scheme, collision resistance of $H$, and strict domain/nonce binding in the signed message. Then
\[
\Adv^{\mathsf{forge}}_{\ARACE}(\lambda)+\Adv^{\mathsf{replay}}_{\ARACE}(\lambda) \leq \negl(\lambda).
\]
\end{theorem}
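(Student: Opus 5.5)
The plan is to bound each advantage separately by a tight reduction and then add the two bounds. The forgery term reduces to EUF-CMA of the attestation scheme (plus collision resistance of $H$). The replay term reduces to the same two assumptions, with injectivity of the canonical encoding $m=\ObjHash\|\domain\|\mathsf{nonce}$ doing the structural work. Throughout I fix the relay verification predicate of Algorithm~\ref{alg:arace}: an output $(\mathsf{obj},\Attest,\mathsf{pk},\mathsf{nonce})$ passes iff $\mathsf{pk}\in\mathcal{E}$ and $\mathsf{Verify}(\mathsf{pk},m,\Attest)=1$ for $m=H(\mathsf{obj})\|\domain\|\mathsf{nonce}$. I also assume the eligible set $\mathcal{E}$ has polynomial size $q_k$ and its keys are honestly generated.

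\emph{Forgery.} Given an adversary $\mathcal{A}$ in $\mathsf{G}_{\mathsf{forge}}$, I build an EUF-CMA adversary $\mathcal{B}$ as follows.
\begin{enumerate}
\item $\mathcal{B}$ receives a challenge key $\mathsf{pk}^*$ and embeds it at a random index $i^*\in[q_k]$ of $\mathcal{E}$. It generates the other eligible keys itself.
\item $\mathcal{B}$ answers attestation queries on $(\mathsf{obj},\domain,\mathsf{nonce})$ by forming $m$ and either calling its signing oracle (for $\mathsf{pk}^*$) or signing locally (for the other keys).
\item When $\mathcal{A}$ outputs a passing $(\mathsf{obj},\Attest)$ under $\mathsf{pk}^*$ on a message not queried in the game, $\mathcal{B}$ outputs $(m,\Attest)$.
\end{enumerate}
The subtlety is the meaning of ``without querying that message''. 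If it means the game-level object was never queried, then $\mathcal{B}$'s forgery fails only when $m$ collides with a queried $m'$. By injectivity of the encoding, this forces $H(\mathsf{obj})=H(\mathsf{obj}')$ with $\mathsf{obj}\neq\mathsf{obj}'$, so we can extract a collision for $H$. This gives
\[
\Adv^{\mathsf{forge}}_{\ARACE}\le q_k\cdot\Adv^{\mathsf{euf\text{-}cma}}+\Adv^{\mathsf{cr}}_H .
\]

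\emph{Replay.} Suppose $\mathcal{A}$ reuses an observed $\Attest$, honestly produced on $m_0=(h_0,d_0,n_0)$ under $\mathsf{pk}$, on a tuple $(h_1,d_1,n_1)\neq(h_0,d_0,n_0)$ that still verifies. Strict domain/nonce binding means the encoding is injective (fixed-length fields or length-prefixed), so $m_1\neq m_0$. If $\mathcal{A}$ did not also obtain an honest signature on $m_1$, then $(m_1,\Attest)$ is a fresh EUF-CMA forgery, via the same embedding reduction. If instead the replay consists of resubmitting an identical tuple, it is not a win: the game requires a different tuple, and the relay's nonce-seen check rejects it regardless. Substituting a different object with the same hash $h_1=h_0$ is the residual case, and it is again charged to $\Adv^{\mathsf{cr}}_H$. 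Summing the two bounds gives a negligible total, since $q_k$ is polynomial.

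The main obstacle I expect is the replay case where $m_1$ \emph{was} legitimately signed: the adversary swaps in the honest attestation for $m_1$, or the two attestations happen to coincide. EUF-CMA does not forbid two distinct messages from sharing a signature value. I would handle this by reading ``reuses one attestation'' as requiring $m_1$ to lie outside the honest query set for $\mathsf{pk}$, which is the natural formalization. Failing that, I would add a negligible term for signature collisions across distinct queried messages, which holds for standard schemes such as Ed25519 and ML-DSA but is not implied by EUF-CMA alone. I would state this interpretive choice explicitly in the proof.
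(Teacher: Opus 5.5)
Your proposal is correct and follows the same route as the paper's sketch. Forgery reduces to EUF-CMA of the attestation scheme, replay/substitution reduces to either an EUF-CMA forgery on a distinct (injectively encoded) message or a collision in $H$, and a union bound combines them. Your version is more careful than the paper's in two places: it makes explicit the $q_k$ key-guessing loss, which the paper's ``formal'' bound drops unless one assumes multi-user EUF-CMA, and it states that the replay target $m_1$ must never have been honestly signed, which the paper tacitly assumes when it calls reuse on a distinct message an EUF-CMA break. If you use your fallback instead, the extra term should bound the event that an honest signature on $m_0$ verifies on a distinct honestly signed $m_1$, not merely literal equality of the two signature values.
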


\begin{proof}[Proof sketch]
A successful forgery gives an EUF-CMA forgery against the signature scheme. A successful replay or substitution across $(\ObjHash,\domain,\mathsf{nonce})$ tuples implies either a valid signature reuse on a distinct message (again an EUF-CMA break) or a hash collision or domain-binding failure. By a standard reduction and union bound, both advantages are negligible. Formally, $\Adv^{\mathsf{forge}}_{\ARACE}(\lambda)+\Adv^{\mathsf{replay}}_{\ARACE}(\lambda)$ is bounded by the sum of the EUF-CMA advantage of the attestation signature scheme and the collision-finding advantage against $H$, both negligible in $\lambda$.
\end{proof}

\begin{theorem}[Admission-policy soundness]
Assume (i) authenticated eligibility metadata (eligible-key set, denylist, epoch parameters) with tampering advantage $\Adv^{\mathsf{meta}}(\lambda)$, (ii) correct quota/nonce state enforcement by honest relays with inconsistency probability $\Adv^{\mathsf{state}}(\lambda)$, and (iii) network integrity against in-flight admission-field manipulation with advantage $\Adv^{\mathsf{net}}(\lambda)$. Then
\[
\Adv^{\mathsf{admit}}_{\ARACE}(\lambda)\leq
\Adv^{\mathsf{meta}}(\lambda)+
\Adv^{\mathsf{state}}(\lambda)+
\Adv^{\mathsf{net}}(\lambda).
\]
In particular, if the three terms are negligible in $\lambda$, then $\Adv^{\mathsf{admit}}_{\ARACE}(\lambda)\leq \negl(\lambda)$.
\end{theorem}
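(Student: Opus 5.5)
The plan is a hybrid/event-decomposition argument on the game $\mathsf{G}_{\mathsf{admit}}$. Let $W$ be the event that the adversary wins, i.e.\ some $(\mathsf{obj},\Attest,\mathsf{pk},\mathsf{nonce})$ is accepted and forwarded by an honest relay although, in the \emph{ground-truth} policy for the current epoch, $\mathsf{pk}\notin\mathcal{E}$, or $\mathsf{pk}$ is denylisted, or $\mathsf{quota}[\mathsf{pk}]\geq Q$ (or the nonce was already recorded). Define three bad events, each matching one of the experiments defined before the theorem. $B_{\mathsf{meta}}$: the eligibility metadata (eligible set, denylist, epoch parameters) held by the honest relay at the acceptance step differs from the authorized metadata for that epoch. $B_{\mathsf{state}}$: the relay's quota counter, nonce set, or epoch-reset state differs from the value prescribed by the admission state machine on the event trace seen so far. $B_{\mathsf{net}}$: the admission-relevant fields (key id, nonce, domain, epoch tag) that the relay processes differ from those emitted by their originating endpoint, while endpoint integrity checks still pass. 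The core claim is that
\[
W \subseteq B_{\mathsf{meta}} \cup B_{\mathsf{state}} \cup B_{\mathsf{net}},
\]
after which a union bound gives the stated inequality.

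To establish the inclusion, I argue by contraposition. Suppose none of the three bad events occurs. Then the relay evaluates lines checking $\mathsf{pk}\notin\mathcal{E}$, the quota, and the nonce of Algorithm~\ref{alg:arace} (together with the denylist check from the admission policy) on the authentic fields, against the authentic metadata and the correct state. Each drop condition is a deterministic predicate of exactly these three inputs. Hence any policy violation triggers a drop, and the message is not forwarded, contradicting $W$. This is a case split over the three violation types: ineligible key, quota breach, and denylisted key. In each case I name the check that fires when inputs are correct.

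Next, for each bad event I build a reduction, or rather a direct embedding, into the corresponding experiment. An adversary $\mathcal{A}$ for $\mathsf{G}_{\mathsf{admit}}$ is run inside $\mathsf{G}_{\mathsf{meta}}$, $\mathsf{G}_{\mathsf{state}}$, or $\mathsf{G}_{\mathsf{net}}$, with the challenger simulating the remaining components honestly. The reduction then outputs the tampered metadata transcript, the inconsistent trace, or the modified in-flight message, respectively. I must verify that the simulation is perfect up to the first bad event, so that $\Pr[B_{\mathsf{x}}]\leq\Adv^{\mathsf{x}}(\lambda)$. To avoid double counting, I would order the events as first occurrence among the three and attribute each to whichever occurs first.

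The main obstacle is making this embedding rigorous given how loosely the three experiments are specified. Two points need care. First, the definition of $\Adv^{\mathsf{state}}$ only ranges over "admissible event traces", so I must show that the trace induced by $\mathcal{A}$ is admissible whenever $B_{\mathsf{meta}}$ and $B_{\mathsf{net}}$ have not occurred. Second, the events must be interpreted so that they are defined in the same probability space. I would handle both by fixing a single experiment that samples all randomness up front and defining the bad events as the first deviation from an idealized reference execution. The final "in particular" clause then follows immediately, since a sum of three negligible functions is negligible.
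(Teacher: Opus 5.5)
Your proposal is correct and takes the same route as the paper's sketch. You show that any admission bypass forces one of the metadata, state, or network bad events, via your check-by-check contrapositive, and then apply a union bound; your first-occurrence attribution and the admissible-trace remark only make explicit details that the paper's sketch leaves implicit.
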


\begin{proof}[Proof sketch]
An admission bypass can occur only if at least one of the following happens: metadata is forged/stale so an ineligible key appears eligible; relay state is inconsistent so quota or nonce checks are skipped; or admission-relevant fields are modified in transit without detection. These events upper-bound $\mathsf{G}_{\mathsf{admit}}$ by a union bound, giving the stated inequality.
\end{proof}

\begin{proposition}[Inclusion safety under proof-off-path]
If consensus rejects blocks with invalid aggregated proof $\Pi$, then no object failing the validity predicate can be finalized, regardless of relay-path admission outcomes.
\end{proposition}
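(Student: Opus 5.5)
The plan is a direct contrapositive argument that uses only the consensus rule and the soundness of the aggregated proof system. The relay-path theorems stated earlier are deliberately not needed. Suppose an object $\mathsf{obj}^\ast$ failing the validity predicate is finalized. By the protocol in Section~\ref{sec:design:propagation}, finalization happens only as a member of the set $S$ in some published block $(S,\Pi)$ that consensus accepted. By hypothesis, consensus accepts only blocks whose aggregated proof $\Pi$ verifies against the statement ``all objects in $S$ satisfy the validity predicate.'' So we have an accepting proof for a false statement, since $\mathsf{obj}^\ast\in S$ violates the predicate.

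The key steps, in order, are as follows.
\begin{enumerate}[nosep,leftmargin=1.5em]
  \item Make the statement proved by $\Pi$ precise. It is the conjunction $\bigwedge_{\mathsf{obj}\in S}\mathsf{Valid}(\mathsf{obj})$, where each $\mathsf{obj}$ is identified by $\ObjHash=H(\mathsf{obj})$ inside the proven relation. This binds $\Pi$ to exactly the published $S$.
  \item Observe that the hypothesis ``consensus rejects blocks with invalid $\Pi$'' means acceptance requires $\mathsf{Verify}_\Pi(S,\Pi)=1$.
  \item Invoke soundness of the aggregated proof system (e.g., recursive STARK soundness). Except with probability $\negl(\lambda)$, no PPT builder produces an accepting $\Pi$ for a false conjunction.
  \item Conclude that finalizing $\mathsf{obj}^\ast$ happens with at most the soundness error. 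The soundness error is negligible, or zero if "invalid" is read as "not attesting a true statement," which gives the literal claim.
\end{enumerate}

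Independence from relay admission is then immediate. Nothing in steps 1 through 4 refers to $\Attest$, the eligible set $\mathcal{E}$, quotas, or nonces. So even if $\Adv^{\mathsf{forge}}$, $\Adv^{\mathsf{replay}}$ or $\Adv^{\mathsf{admit}}$ are non-negligible, the adversary can at most get invalid objects \emph{into the pool}. It cannot get them into an accepted block. I would state this explicitly by letting the adversary fully control which $(\mathsf{obj},\Attest)$ reach the builder, and even letting the adversary act as the builder.

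The main obstacle is step 1: ensuring the statement proved by $\Pi$ is bound to the actual published contents of $S$. A malicious builder might otherwise prove validity of some $S'$ and publish $S\neq S'$. I would first handle this by requiring that block verification recompute the hashes of the published objects and feed them as public inputs to $\mathsf{Verify}_\Pi$. In that case a mismatch reduces to a collision in $H$, which is assumed collision resistant as in the first theorem of this section. The bound then becomes soundness error plus collision advantage, both negligible.
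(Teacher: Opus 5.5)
Your proof is essentially the paper's argument---finalization requires a consensus-accepted $\Pi$ over the included set, an invalid included object would make $\Pi$ invalid, and relay admission never enters---but you spell out two things the paper leaves implicit: the soundness of the proof system and the binding of $\Pi$ to the published $S$. One caveat on your binding fix: if only the hashes $H(\mathsf{obj})$ are public inputs, the proven relation is existential over preimages, so plain soundness does not rule out an accepting $\Pi$ alongside a published invalid preimage, and your reduction to a collision in $H$ needs knowledge soundness (to extract the valid preimage); alternatively, make the published objects themselves the public input, in which case plain soundness suffices and no collision term is needed.
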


\begin{proof}[Proof sketch]
Relay admission affects propagation, not final validity. Finalization requires a valid $\Pi$ over included objects. Any invalid object would invalidate $\Pi$, causing consensus rejection.
\end{proof}

\subsection{Relay Integrity}
\label{sec:security:relay}
Relay nodes only forward $(\mathsf{obj}, \Attest)$. They do not need to trust submitters for \emph{validity}; they only need to verify that $\Attest$ is well-formed and bound to $\ObjHash$ and $\domain$. Invalid or malformed objects can still be forwarded; the builder will reject them when building the aggregated proof (only valid objects can be included in a valid $\Pi$). So relay nodes do not become a bottleneck for validity checking, and the security of inclusion rests on the builder's proof.

\subsection{Spam and DoS}
\label{sec:security:spam}
If attestation is cheap (e.g., anyone can create a key and sign), spam of $(\mathsf{obj}, \Attest)$ could flood the network. Mitigations include: (i) attestation keys tied to stake or identity commitment (e.g., only registered or bonded keys); (ii) rate limiting per key or per commitment; (iii) small bond or fee per attestation. These are protocol-layer choices; \ARACE only requires that $\Attest$ is compact and verifiable.

\subsection{Binding and Replay}
\label{sec:security:binding}
$\Attest$ must bind to $\ObjHash$ (and $\domain$) so that an attestation for one object cannot be reused for another. A nonce or sequence number in the signed message prevents reuse of the same attestation for the same object across time or contexts. Domain separation prevents cross-chain or cross-mempool replay.

\subsection{Builder Security}
\label{sec:security:builder}
The builder is assumed to produce a correct aggregated proof $\Pi$ for the set it includes. Builder incentives and slashing (e.g., in proof-of-stake or builder markets) are outside the scope of this paper; we assume that the builder's output is verified by the consensus layer and that an invalid $\Pi$ is rejected.

\section{Structural Bandwidth Comparison}
\label{sec:bandwidth}

Let $n$ be the number of objects, $d$ the relay node degree (number of peers), and $T$ the tick interval (e.g., 0.5\,s). Assume one STARK~\cite{stark} $\approx$ 128\,KB and one attestation $\approx$ 64--256 bytes (classical, e.g., Ed25519~\cite{ed25519}) or $\approx$2.5\,KB (PQC signature, e.g., ML-DSA~\cite{mldsa}).

\paragraph{Recursive STARK propagation~\cite{vbuterin-recursive-stark-mempool}.}
Each node produces one recursive STARK per tick and sends it to each peer. Per-node proof bandwidth (outbound): $128\,\text{KB} \times d / T$. Inbound is similar. So $\approx 2 \times 128\,\text{KB} \times d / T$ (e.g., $\approx$2\,MB/s for $d=8$, $T=0.5$\,s). Object payloads: $n$ objects $\times$ object size, as in status quo. \emph{Proof traffic on path:} $O(d/T)$ per node, constant in $n$.

\paragraph{AR-ACE (proof-off-path).}
No proof is sent on the path. Per object, each link carries $\mathsf{obj}$ plus $\Attest$. So per-node bandwidth for \emph{proof-related} data is zero. For attestations: at most $n \times |\Attest|$ over time (objects and attestations flow together). If attestations are 64--256 bytes, total attestation overhead is $n \times 64$--256 bytes, which is the same order as today's per-object metadata (e.g., signatures in the non-PQC case). \emph{Proof traffic on path:} 0. Validity proof exists only at the builder (one aggregated proof in the block).

\paragraph{Order-of-magnitude reduction.}
Compared to \emph{proof-carrying} propagation (each object with its own 128\,KB STARK), \ARACE removes $n \times 128\,\text{KB}$ from the path. Compared to recursive STARK propagation, \ARACE removes $128\,\text{KB} \times d / T$ per node per tick from the path. In both cases, the propagation path carries no validity proofs; only the builder holds and publishes one proof. So for the \emph{proof-related} bandwidth problem, \ARACE achieves an order-of-magnitude reduction (effectively eliminating it on the path). This statement is metric-scoped: it does not claim identical reduction for total relay traffic under adversarial invalid-object pressure or policy-control traffic.

\subsection{Parametric Compute Envelope (Analytical, Non-Empirical)}
\label{sec:bandwidth:compute-example}
\noindent\textbf{Disclaimer.} This subsection gives hardware-agnostic analytical relations, not benchmark claims.

For a 10-minute analysis window, let $N$ be objects observed, $N_{\mathrm{ticks}}$ be relay ticks, $K$ be relay nodes, $c_h$ be per-tick heavy verification cost (recursive-proof path), and $c_l$ be per-object light attestation verification cost. The non-builder relay CPU ratio is
\[
\frac{\text{AR-ACE CPU}}{\text{Recursive-on-path CPU}}=\frac{N\cdot c_l}{N_{\mathrm{ticks}}\cdot c_h}.
\]
\noindent AR-ACE yields lower relay verification CPU iff
\[
\frac{c_h}{c_l}>\frac{N}{N_{\mathrm{ticks}}}.
\]
For the workload pair $(N,N_{\mathrm{ticks}})=(50{,}000,1{,}200)$, this threshold is
\[
\frac{c_h}{c_l}>41.67.
\]
Thus, any deployment where heavy-check cost exceeds light attestation-check cost by more than $41.67\times$ realizes relay-side CPU savings; the exact percentage then follows from the measured $(c_h,c_l)$ on the target implementation/hardware.

Network-wide relay CPU scales by the same ratio:
\[
\frac{K\cdot N\cdot c_l}{K\cdot N_{\mathrm{ticks}}\cdot c_h}
=\frac{N\cdot c_l}{N_{\mathrm{ticks}}\cdot c_h}.
\]

\noindent\textbf{Interpretation.} The result is configuration-independent: heavy verification is shifted off relay hops and concentrated at inclusion time, while realized gains are determined by measured platform parameters.

\subsection{Practical Optimization Directions}
\label{sec:bandwidth:optimizations}
Beyond the analytical baseline, implementation-level optimizations can further improve throughput and latency:
\begin{itemize}[nosep,leftmargin=1.5em]
  \item \textbf{Batch verification of attestations.} Relays can verify many attestations together to reduce per-object verification overhead.
  \item \textbf{Two-stage propagation (hash+tag then fetch).} Nodes first relay compact metadata and fetch full objects on demand, reducing wasted bandwidth on invalid or low-priority objects.
  \item \textbf{Adaptive admission control.} Per-key quota, fee, and bond parameters can be adjusted by congestion regime to control spam pressure.
  \item \textbf{Builder-side parallel pre-check pipeline.} Builders can parallelize lightweight filtering and bucketization before aggregated proving, shortening the heavy critical path.
  \item \textbf{Profile switching (Hybrid vs Full-PQC).} Deployments can run low-overhead hybrid attestation in normal conditions and switch to full-PQC attestation under elevated threat assumptions.
\end{itemize}

\begin{center}
\textbf{Optimization directions and their primary impact metrics.} Quantitative gains are implementation- and workload-dependent; this table identifies the structural relationship only.
\vspace{0.5em}
\begin{tabularx}{0.98\linewidth}{@{}l l X@{}}
\toprule
Optimization & Primary metric & Mechanism \\
\midrule
Batch attestation verification & Relay CPU & Amortize verification overhead across multiple attestations per batch \\
Two-stage propagation (hash+tag) & Relay bandwidth & Defer full object fetch; avoid transmitting invalid or low-priority objects \\
Adaptive admission control & Adversarial overhead & Throttle ineligible or over-quota submitters at relay entry \\
Builder parallel pre-check & Builder latency & Parallelize lightweight filtering before aggregated proving \\
Hybrid-profile operation & Relay CPU + bandwidth & Use compact classical attestations when PQC threat model permits \\
\bottomrule
\end{tabularx}
\end{center}

\noindent The magnitude of each optimization is workload- and implementation-dependent. Empirical evaluation under realistic mempool conditions is left for future work.

\section{Advantages of the ACE-GF Instantiation}
\label{sec:acegf}

Using ACE-GF to derive attestation keys yields:

\begin{itemize}[nosep,leftmargin=1.5em]
  \item \textbf{Unified identity.} The same root ($\REV$) that backs on-chain identity commitments and wallet keys derives the relay attestation key. One mnemonic, one backup; no separate attestation key to manage.
  \item \textbf{Context isolation.} The attestation key is in a dedicated derivation context. A relayer or compromised relay-only key does not expose authorization or signing keys. Isolation is enforced by ACE-GF's context separation.
  \item \textbf{PQC readiness.} A PQC stream (e.g., ML-DSA) can be used for attestation, keeping the entire pipeline PQC-consistent. Migration to PQC attestation does not require a new identity or backup.
  \item \textbf{Determinism and auditability.} Same identity and context always yield the same attestation key. Recovery and audit (which key attested) are straightforward.
  \item \textbf{Narrative consistency.} One identity layer (ACE-GF) for both on-chain authorization and relay attestation (\ARACE), simplifying protocol and product storytelling.
\end{itemize}

These advantages are not strictly required for proof-off-path propagation; they make the ACE-GF-based instantiation the preferred option when identity coherence and PQC consistency are design goals.

\section{Conclusion}
\label{sec:conclusion}

We introduced \ARACE, a proof-off-path propagation design for post-quantum mempools. By carrying only lightweight attestations on the relay path and performing a single aggregated validity proof at the builder, we remove proof traffic from the path entirely and achieve an order-of-magnitude reduction in proof-related bandwidth. Security is formalized via three games (attestation forgery, replay, and admission bypass) with reduction-based arguments under standard assumptions. The design is compatible with existing validity predicates and aggregated proof systems (e.g., recursive STARKs at the builder). When instantiated with ACE-GF-derived attestation keys, \ARACE preserves a unified identity story with on-chain authorization and is PQC-ready. We leave concrete parameterization, implementation, and integration with specific mempool and builder implementations for future work.


\end{document}